\numberwithin{equation}{section}
\newtheorem{lemma}{Lemma}
\newtheorem{theorem}{Theorem}
\newcommand{\bs}{\boldsymbol}
\newcommand{\mb}{\mathbb}
\newcommand{\mf}{\mathbf}
\newcommand{\mr}{\mathrm}
\newcommand{\e}{\mathbf{e}}
\newcommand{\I}{\mathrm{i}}
\newcommand{\pa}{\partial}
\newcommand{\bsb}{\begin{subequations}}
\newcommand{\esb}{\end{subequations}}
\begin{document}

\title{\textbf{Cauchy matrix approach to the SU(2) self-dual Yang--Mills equation}}

\author{Shangshuai Li$^1$,~~ Changzheng Qu$^2$,~~
Xiangxuan Yi$^3$,~~ Da-jun Zhang$^1$\footnote{Corresponding author. Email: djzhang@staff.shu.edu.cn} \\
{\small $~^1$Department of Mathematics, Shanghai University, Shanghai 200444,  China}\\
{\small $~^2$School of Mathematics and Statistics, Ningbo University, Ningbo 315211, China}\\
{\small $~^3$Qianweichang School, Shanghai University, Shanghai 200444,  China}}
\maketitle

\begin{abstract}
The Cauchy matrix approach is developed to solve the $\mf{SU}(2)$  self-dual Yang--Mills equation.
Starting from a Sylvester matrix equation
coupled with certain dispersion relation for infinite coordinates,
the self-dual Yang--Mills equation under Yang's formulation is constructed.
By imposing further constraints on complex independent variables,
a broad class of explicit solutions are obtained.

\begin{description}
\item[Keywords:] self-dual Yang-Mills equation, Cauchy matrix approach, solutions, integrable system
\item[PACS numbers:] 02.30.Ik, 02.30.Ks, 05.45.Yv
\end{description}
\end{abstract}

\section{Introduction}\label{sec-1}

The Yang--Mills theory is the most important development in physics in the second half of last century (see \cite{Wu-2003}).
The pioneer work is that of Yang and Mills \cite{YM-1954},
which laid a foundation  of nonabelian gauge theories that explain the electromagnetic, the strong and weak
nuclear interactions.
The theory has also geometric interpretations in its nature,
where the concept of gauge fields is found to be identical to fiber bundles,
e.g. \cite{At-1980,BL-1982,Gu-1981,WY-1975} (also see \cite{B-1987,We-2016} and the references therein).

The Yang--Mills equation of motion is \cite{P-1980} (please refer Section \ref{sec-2} for notations)
\begin{equation}\label{YM}
\partial_\mu F_{\mu\nu}+[B_\mu, F_{\mu\nu}]=[\mathcal{D}_\mu, F_{\mu\nu}]=0,
\end{equation}
which corresponds to the case where action functional $S$ and energy functional $E$
take local minima in the sense of semi-classical approximation.
It is hard to solve this equation exactly. Even one could find solutions to \eqref{YM},
it is still difficult to verify $S$ or $E$ reaches local minima.
However, it can be shown that (see \cite{BPST-1975,P-1980})
when the gauge field strength $F_{\mu\nu}$ is self dual, for given integer topological charge $q$,
the gauge fields are absolute minima of the action $S$.
The corresponding solutions are called instantons and monopoles (for static gauge fields).
The paper \cite{P-1980} well collected early approaches to the solutions
of the self-dual Yang--Mills (SDYM) equation before 1980,
such as the approach based on the so called Corrigan--Fairlie--'t Hooft--Wilczek ansatz \cite{CF-1977,Wilc-1977},
the Atiyah--Hitchin--Drinfeld--Manin construction \cite{AHDM-1978} on
instantons with integer topological charge $q$ and depending on (for
$\mathbf{SU}(\mathcal N)$ case) at least $8q-3$ parameters \cite{AHS-1977},
and the B\"acklund transformation approach \cite{CFYG-1978} based on the Atiyah--Ward ansatz \cite{AW-1977}.
Both  \cite{AHDM-1978} and \cite{AW-1977} follows Ward's observation in 1977 \cite{Ward-1977}
on the connection between self-dual gauge fields and twistor theory.
One can also refer \cite{P-1980} and the references therein for more details.

The SDYM equation is an integrable system. It has a Lax pair and has
Painlev\'e property for any gauge group \cite{JKM-1982,Ward-1982},
therefore some methods based on integrability
have been employed to solve the SDYM equation,
such as a direct transform approach by solving Lax pair \cite{BZ-1978},
B\"acklund transformation based on Riemann-Hilbert problem \cite{UN-1982},
an approach inspired from Sato's theory \cite{T-1984},
bilinear method  \cite{ONG-2001,SOM-1998},
and Darboux transformation \cite{NGO-2000}.

In this paper we aim to solve the $\mf{SU}(2)$ SDYM equation via a direct method,
the Cauchy matrix approach.
The Cauchy matrix approach is a method to construct and study integrable equations
by means of the Sylvester-type equations.
In this approach integrable equations are presented as closed forms of some recurrence relations
involving  derivatives (or shifts).
It was first systematically used in \cite{NAH-2009} to investigate integrable quadrilateral equations and later
developed in \cite{XZZ-2014,ZZ-2013} to more general cases.
 We will construct the $\mf{SU}(2)$ SDYM equation together with its explicit solutions.

The paper is organized as follows.
In Sec.\ref{sec-2} we recall Yang's formulation for the self-duality of the Yang-Mills equation.
Then, in Sec.\ref{sec-3} we make use of the Cauchy matrix approach
to construct a SDYM equation together with its solutions.
These solutions will be elaborated in Sec.\ref{sec-4} by imposing constraints so that they meet
self-duality of $\mf{SU}(2)$ gauge group.
Finally, concluding remarks are given in Sec.\ref{sec-5}.
There are three appendixes, which are devoted to proving Lemma \ref{lem-2-2},
presenting solutions of the Sylvester equation \eqref{Syl_eq},
and presenting examples of solutions of the SDYM equation.

\section{The SDYM equation: Yang's formulation}\label{sec-2}

Let $B_\mu$'s be matrix valued gauge potentials defined on $\mathbb{R}^4$
and $F_{\mu\nu}$ the gauge field strength defined by
\begin{align}\label{field-strength}
    F_{\mu\nu}\doteq\partial_\nu B_\mu-\partial_\mu B_\nu-[B_\mu,B_\nu],
\end{align}
where  $[\cdot,\cdot]$ is the Lie bracket defined as $[G,H]=GH-HG$,
and $\partial_\mu$ stands for the differential operator $\partial/\partial x^\mu$,
$(x^0, x^1, x^2, x^3)\in \mathbb{R}^4$.
For a  given $\mf{SU}(\mathcal N)$ gauge field defined on  $\mathbb{R}^4$,
the self-duality gives rise to
\begin{equation}\label{self-dual}
    F_{\mu\nu}=*F_{\mu\nu}\doteq\frac{1}{2}\epsilon_{\mu\nu\alpha\beta}F_{\alpha\beta},
\end{equation}
where $\epsilon_{\mu\nu\alpha\beta}$ is the Levi-Civita tensor,
$*F$ is the duality of field strength $F$, and $\mu,\nu,\alpha,\beta$ run over $\{0, 1,2,3\}$.
In this case, the gauge potentials $B_\mu$'s are also self-dual and belong to $\mf{su}(\mathcal N)$.
Yang extended $(x^0, x^1, x^2, x^3)$ from $\mathbb{R}^4$ to $\mathbb{C}^4$
and introduced coordinates (transformation) \cite{Yang-1977} (cf.\cite{Ward-1977})
\begin{align*}
    \bs Y=
    \sqrt2
    \begin{pmatrix}
        y & -\bar z \\
        z & \bar y
    \end{pmatrix}
    =x^0-\I\,\bs x\cdot\bs\sigma,
\end{align*}
i.e.
\begin{equation}
y=\frac{\sqrt2}{2}(x^0-\I x^3), ~ \bar y=\frac{\sqrt2}{2}(x^0+\I x^3),~
z=\frac{\sqrt2}{2}(x^2-\I x^1), ~ \bar z=\frac{\sqrt2}{2}(x^2+\I x^1),
\end{equation}
where $\I^2=-1$, $\bs x=(x^1,x^2,x^3)$, $\bs \sigma=(\sigma_1,\sigma_2,\sigma_3)$
and $\sigma_i$'s are Pauli matrices.
Note that here $\bar y, \bar z$ denote variables independent of the complex conjugates, $y^*, z^*$,
of $y$ and $z$, so that real Euclidean space is specified by $\bar y = y^*$ and $\bar z=z^*$.
The self-duality condition \eqref{self-dual} is then reduced to
\bsb\label{self-dual-expand-new}
\begin{align}
    \label{self-dual-expand-new1}F_{yz}=F_{\bar y\bar z}=0,\\
    \label{self-dual-expand-new2}F_{y\bar y}+F_{z\bar z}=0.
\end{align}
\esb
Next, introducing covariant derivative $\mathcal D_\chi:=\partial_\chi+B_\chi$,
$\chi\in\{y,\bar y,z,\bar z\}$ , where (cf.\cite{Po-1980})
\begin{align}
    B_y=B_{0}+\I B_{3},~~ B_{\bar y}=B_{0}-\I B_{3}, ~~
    B_z=B_{2}+\I B_{1}, ~~ B_{\bar z}=B_{2}-\I B_{1},
\end{align}
one obtains a   representation of \eqref{self-dual-expand-new1}
with respect to $\mathcal D_\chi$,
\begin{align}
    [\mathcal D_y,\mathcal D_z]=0, ~~~ [\mathcal D_{\bar y},\mathcal D_{\bar z}]=0.
\end{align}
This  implies that there exist two $\mathcal N\times \mathcal N$ generating matrices $D$ and $\bar D$ such that
\begin{align}\label{generating-matrices}
    \mathcal D_y(D)=\mathcal D_z(D)=0, ~~~
    \mathcal D_{\bar y}(\bar D)=\mathcal D_{\bar z}(\bar D)=0,
\end{align}
which leads to
\begin{align}\label{gauge-potential}
    B_y=D\pa_y(D^{-1}),~~B_z=D\pa_z(D^{-1}), ~~
     B_{\bar y}=\bar D\pa_{\bar y}(\bar D^{-1}), ~~ B_{\bar z}=\bar D\pa_{\bar z}(\bar D^{-1}).
\end{align}
Then, for the field strength $F_{\mu\nu}$ defined by \eqref{field-strength} with
the above $B_\chi$'s, equation \eqref{self-dual-expand-new1} is satisfied and \eqref{self-dual-expand-new2}
is cast into \cite{BFNY-1978}
\begin{align}\label{SDYM}
     (J_{\bar y}J^{-1})_y+(J_{\bar z}J^{-1})_z=0,
\end{align}
or its alternative form
\begin{align}\label{SDYM-a}
    (J^{-1}J_y)_{\bar y}+(J^{-1}J_z)_{\bar z}=0,
\end{align}
where
\begin{equation}
J=D\bar D^{-1}.
\end{equation}

When the gauge group is of $\mf{SU}(\mathcal N)$,
it turns out that \cite{Po-1980,Yang-1977} $D,\bar D\in\mf{SL}(\mathcal N)$ and $\bar D=(D^\dagger)^{-1}$,
where $D^\dagger=(D^*)^T$,
and hence $J=DD^\dagger$.
In other words, $J$ is a positive-definite Hermitian matrix with $|J|=1$.

Equation \eqref{SDYM} (or its alternative form) is usually the equation that was solved by using integrable methods,
see \cite{MZ-1981,NGO-2000,ONG-2001,SOM-1998}.
In $\mf{SU}(2)$ case, once $J$ is obtained, $D$ can be recovered by Yang's $R$-gauge \cite{Yang-1977} as the following.
Writing $J$ in the form
\begin{align}\label{J-expand}
    J=\frac{1}{f}
    \begin{pmatrix}
    1 & -g \\
    e & f^2-eg \\
    \end{pmatrix},
\end{align}
where $f$ is real and $e=-g^*$. Then $D$ takes the form
\begin{align}
    D=\frac{1}{\sqrt f}
    \begin{pmatrix}
        1 & 0\\
        e & f \\
    \end{pmatrix}U
\end{align}
where $U\in \mf{SU}(2)$.
%is a $2\times 2$ unitary matrix with determinant $|U|=1$.
Hence   $B_\chi$'s are recovered from \eqref{gauge-potential}
and so   are $F_{\mu\nu}$'s   from \eqref{field-strength}.

In this paper we will develop the Cauchy matrix approach to  construct equation \eqref{SDYM} together with
its explicit solution $J$.

\section{The Cauchy matrix approach to the SDYM equation}\label{sec-3}

In order to construct the $\mathbf{SU}(2)$ SDYM equation \eqref{SDYM},
we start from the Sylvester equation
\begin{align}\label{Syl_eq}
    \bs K\bs M-\bs M\bs K=\bs r\bs s^T,
\end{align}
dressed with dispersion relations
\begin{align}\label{rs_move}
    \bs r_{x_n}=\bs{AK}^n\bs r,~~ \bs s_{x_n}=\bs A(\bs K^T)^n\bs s, ~~(n\in\mb Z),
\end{align}
where $\{x_n\}$ are independent infinite complex variables,
$\bs K,\bs M,\bs A,\bs r,\bs s$ are block matrices in the form of
\begin{align}\label{KAM}
    \bs K=
    \begin{pmatrix}
        \bs K_1 & \bs 0 \\
        \bs 0 & \bs K_2 \\
    \end{pmatrix},
    ~
    \bs M=
    \begin{pmatrix}
        \bs 0 & \bs M_1 \\
        \bs M_2 & \bs 0 \\
    \end{pmatrix},
~
    \bs A=
    \begin{pmatrix}
        \bs I_{N_1} & \bs 0 \\
        \bs 0 & -\bs I_{N_2}
    \end{pmatrix},
~
    \bs r=
    \begin{pmatrix}
        \bs r_1 & \bs 0 \\
        \bs 0 & \bs r_2 \\
    \end{pmatrix},
 ~
    \bs s=
    \begin{pmatrix}
        \bs 0 & \bs s_1 \\
        \bs s_2 & \bs 0 \\
    \end{pmatrix},
\end{align}
with $\bs K_i\in \mb C_{N_i\times N_i},\bs M_1\in \mb C_{N_1\times N_2}[{\mathbf{x}}],
\bs M_2\in\mb C_{N_2\times N_1}[\mathbf{x}], \bs r_i,\bs s_i\in \mb C_{N_i\times 1}[\mathbf{x}]$,
$\bs I_{N_i}$ being the $N_i$-th order identity matrix for $i=1,2$,
$\mathbf{x}=(\cdots, x_{-1}, x_0, x_1,\cdots)$, and $N_1+N_2=2N$.
In addition, we  assume $\bs K_1$ and $\bs K_2$ are invertible and do not share any eigenvalues
so that   the Sylvester equation \eqref{Syl_eq} has a unique solution $\bs M$ for given
$\bs K,  \bs r,\bs s$  \cite{Syl}.
Define an infinite matrix $\bs S=(\bs S^{(i,j)})_{\infty\times  \infty}$ where
each $\bs S^{(i,j)}$ is a $2 \times 2$ matrix defined as
\begin{align}\label{Sij}
    \bs S^{(i,j)}\doteq \bs s^T\bs K^j(\bs I+\bs M)^{-1}\bs K^i\bs r=
    \begin{pmatrix}
        s_1^{(i,j)} & s_2^{(i,j)} \\
        s_3^{(i,j)} & s_4^{(i,j)} \\
    \end{pmatrix},
~~ (i,j\in\mb Z),
\end{align}
i.e.
\bsb\label{Sij_element}
\begin{align}
    s_1^{(i,j)}&=-\bs s_2^T \bs K_2^j(\bs I_{N_2}-\bs M_2\bs M_1)^{-1}\bs M_2\bs K_1^i\bs r_1, \\
    s_2^{(i,j)}&=\bs s_2^T\bs K_2^j(\bs I_{N_2}-\bs M_2\bs M_1)^{-1}\bs K_2^i\bs r_2, \\
    s_3^{(i,j)}&=\bs s^T_1\bs K_1^j(\bs I_{N_1}-\bs M_1\bs M_2)^{-1}\bs K_1^i\bs r_1, \\
    s_4^{(i,j)}&=-\bs s^T_1\bs K_1^j(\bs I_{N_1}-\bs M_1\bs M_2)^{-1}\bs M_1\bs K_2^i\bs r_2,
\end{align}
\esb
where $\bs I$  specially denotes the $2N$-th order identity matrix.
The above settings are the same as those for deriving the Ablowitz--Kaup--Newell--Segur (AKNS) system \cite{Zhao-2018},
except here in \eqref{rs_move} we have introduced variable $x_0$ and its dispersion relation.
Thus we may make use of the results already obtained in \cite{Zhao-2018}.
However,   the variable $x_0$ does play a useful role in our procedure
of deriving the SDYM equation \eqref{SDYM}.

For $\bs S^{(i,j)}$ defined in \eqref{Sij} where $\bs K, \bs M, \bs r, \bs s$ are governed by the Sylvester equation
\eqref{Syl_eq}, there exists a recursive relation independent of dispersion relations,
see Proposition 2 in \cite{Zhao-2018} and cf.\cite{XZZ-2014}.

\begin{lemma}\label{lem-2-1}
$\{\bs S^{(i,j)}\}$ defined in \eqref{Sij} satisfy
\begin{equation}
\bs S^{(i,j+s)}=\bs S^{(i+s,j)}-\sum^{s-1}_{l=0} \bs S^{(s-1-l,j)}\bs S^{(i,l)},~~ (s=1,2,\cdots),
\end{equation}
where $\bs K, \bs M, \bs r, \bs s$ obey the Sylvester equation \eqref{Syl_eq}.
In particular, when $s=1$, it reads
\begin{align}\label{Sij_re}
    \bs S^{(0,j)}\bs S^{(i,0)}=\bs S^{(i+1,j)}-\bs S^{(i,j+1)}.
\end{align}
\end{lemma}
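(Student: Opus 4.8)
The plan is to reduce everything to a single commutation identity for the resolvent $\bs N \doteq (\bs I + \bs M)^{-1}$ and then bootstrap the general relation from the case $s=1$ by induction. First I would rewrite the Sylvester equation \eqref{Syl_eq} by adding the trivial identity $\bs K\bs I - \bs I\bs K = \bs 0$, obtaining $\bs K(\bs I+\bs M) - (\bs I+\bs M)\bs K = \bs r\bs s^T$. Conjugating this by $\bs N$ on both sides and using $(\bs I+\bs M)\bs N = \bs N(\bs I+\bs M) = \bs I$ yields the key relation
\begin{align}
\bs N\bs K - \bs K\bs N = \bs N\bs r\bs s^T\bs N.
\end{align}
This is the only place the Sylvester structure enters; everything afterward is formal algebra built on it.

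Second, for the base case $s=1$ I would compute $\bs S^{(i+1,j)} - \bs S^{(i,j+1)}$ directly from \eqref{Sij}. Writing both terms as $\bs s^T\bs K^j(\cdots)\bs K^i\bs r$ with the central factor $\bs N\bs K$ and $\bs K\bs N$ respectively, the difference is $\bs s^T\bs K^j(\bs N\bs K - \bs K\bs N)\bs K^i\bs r$. Substituting the key relation turns the central factor into the rank-one matrix $\bs N\bs r\bs s^T\bs N$, so the expression factorizes as $(\bs s^T\bs K^j\bs N\bs r)(\bs s^T\bs N\bs K^i\bs r) = \bs S^{(0,j)}\bs S^{(i,0)}$, which is exactly \eqref{Sij_re}.

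Third, for the general statement I would induct on $s$. Assuming the formula at level $s$ for all $i,j$, I would apply it with $j$ replaced by $j+1$ to expand $\bs S^{(i,(j+1)+s)} = \bs S^{(i,j+s+1)}$, then use the $s=1$ relation to lower the second index of every term of the form $\bs S^{(\cdot,\,j+1)}$ back to $\bs S^{(\cdot,\,j)}$. The terms carrying $\bs S^{(\cdot,\,j)}$ directly assemble into the claimed level-$(s+1)$ sum, up to one residual block of rank-one products all sharing the common left factor $\bs S^{(0,j)}$. The main bookkeeping obstacle is showing that this residual collapses: factoring out $\bs S^{(0,j)}$ on the left, the bracket that remains is precisely the level-$s$ identity specialized to $j=0$, hence equals $-\bs S^{(0,j)}\bs S^{(i,s)}$ and supplies exactly the missing $l=s$ term of the target sum. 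The delicate point throughout is tracking the index shifts while respecting the noncommutativity of the $2\times2$ blocks, since the products $\bs S^{(s-1-l,j)}\bs S^{(i,l)}$ must never be reordered; apart from this the argument is purely mechanical manipulation of the key relation.
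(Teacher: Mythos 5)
Your proof is correct. The paper itself does not prove this lemma --- it simply cites Proposition~2 of the reference on the AKNS system --- so the natural comparison is with the standard Cauchy-matrix argument, which is also visible in the paper's own Appendix~\ref{A}: there the Sylvester equation is first upgraded to the power identity $\bs K^s\bs M-\bs M\bs K^s=\sum_{l=0}^{s-1}\bs K^{s-1-l}\bs r\bs s^T\bs K^l$ (equation \eqref{recur_re1}), and conjugating that by $\bs N=(\bs I+\bs M)^{-1}$ gives $\bs N\bs K^s-\bs K^s\bs N=\sum_{l=0}^{s-1}\bs N\bs K^{s-1-l}\bs r\,\bs s^T\bs K^l\bs N$; sandwiching between $\bs s^T\bs K^j$ and $\bs K^i\bs r$ then yields the full recurrence for every $s$ in a single step, with no induction. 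You instead establish only the $s=1$ commutation $\bs N\bs K-\bs K\bs N=\bs N\bs r\bs s^T\bs N$ and bootstrap the general case by induction on $s$; I checked the induction and it closes correctly --- applying the level-$s$ identity at $j+1$, lowering each second index with the $s=1$ relation, and recognizing the residual bracket as the level-$s$ identity at $j=0$ does produce exactly the missing $l=s$ term $-\bs S^{(0,j)}\bs S^{(i,s)}$ (your phrasing ``the bracket \ldots equals $-\bs S^{(0,j)}\bs S^{(i,s)}$'' should strictly read that the bracket equals $\bs S^{(i,s)}$ and the signed residual equals $-\bs S^{(0,j)}\bs S^{(i,s)}$, but the logic is sound). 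The trade-off is minor: your route uses less input (only the raw Sylvester equation, never the power identity) at the cost of an induction with some index bookkeeping, while the direct route gets all $s$ at once but requires first proving \eqref{recur_re1}, which is itself an easy induction. Both are valid; yours is self-contained and would serve as a legitimate replacement for the citation.
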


With respect to the dispersion relation \eqref{rs_move}, $\bs S^{(i,j)}$ evolves as the following.
\begin{lemma}\label{lem-2-2}
$\bs S^{(i,j)}$ defined by \eqref{Sij} obeys evolutions
\begin{subequations}\label{Sij_moves}
\begin{align}
\bs S^{(i,j)}_{x_n}&=\bs S^{(i+n,j)}\bs a-\bs a\bs S^{(i,j+n)}
    -\sum_{l=0}^{n-1}\bs S^{(n-1-l,j)}\bs a\bs S^{(i,l)}, &(n\in\mb Z^+),\label{Sij_moves1} \\
\bs S^{(i,j)}_{x_0}&=\bs S^{(i,j)}\bs a-\bs a\bs S^{(i,j)}=[\bs S^{(i,j)},\bs a], &~\label{Sij_moves2}\\
\bs S^{(i,j)}_{x_n}&=\bs S^{(i+n,j)}\bs a-\bs a\bs S^{(i,j+n)}
    +\sum_{l=-1}^{n}\bs S^{(n-1-l,j)}\bs a\bs S^{(i,l)}, &(n\in\mb Z^-),\label{Sij_moves3}
\end{align}
\end{subequations}
where $\bs K, \bs M, \bs r, \bs s$ satisfy the Sylvester equation \eqref{Syl_eq}
and dispersion relation \eqref{rs_move},
and $\bs a=\sigma_3=\mf{diag}(1,-1)$.
\end{lemma}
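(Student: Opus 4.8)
The plan is to differentiate the defining expression \eqref{Sij} directly and to reduce every resulting term back to the building blocks $\bs S^{(i,j)}$ by exploiting the block structure of $\bs A,\bs r,\bs s$ in \eqref{KAM}. Writing $\bs U=(\bs I+\bs M)^{-1}$, so that $\bs S^{(i,j)}=\bs s^T\bs K^j\bs U\bs K^i\bs r$, the product rule gives
\[
\bs S^{(i,j)}_{x_n}=(\bs s^T)_{x_n}\bs K^j\bs U\bs K^i\bs r+\bs s^T\bs K^j\bs U_{x_n}\bs K^i\bs r+\bs s^T\bs K^j\bs U\bs K^i\bs r_{x_n}.
\]
The first ingredient I would record is the pair of intertwining identities forced by \eqref{KAM}, namely $\bs A\bs r=\bs r\bs a$ and $\bs s^T\bs A=-\bs a\bs s^T$, together with $\bs A\bs K=\bs K\bs A$; these are exactly what convert the global sign matrix $\bs A$ into the $2\times2$ matrix $\bs a=\sigma_3$. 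Using $\bs r_{x_n}=\bs A\bs K^n\bs r=\bs K^n\bs r\bs a$ and $(\bs s^T)_{x_n}=\bs s^T\bs K^n\bs A=-\bs a\bs s^T\bs K^n$, both read off from \eqref{rs_move}, the first and third pieces collapse immediately to the boundary terms $\bs S^{(i+n,j)}\bs a$ and $-\bs a\bs S^{(i,j+n)}$.

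The substantive work is the middle piece. Differentiating $\bs U(\bs I+\bs M)=\bs I$ gives $\bs U_{x_n}=-\bs U\bs M_{x_n}\bs U$, so I need $\bs M_{x_n}$. Differentiating the Sylvester equation \eqref{Syl_eq} and feeding in the intertwining identities turns the source into a clean commutator,
\[
\bs K\bs M_{x_n}-\bs M_{x_n}\bs K=\bs K^n(\bs r\bs a\bs s^T)-(\bs r\bs a\bs s^T)\bs K^n .
\]
For $n\in\mb Z^+$ this is solved by the telescoping sum $\bs M_{x_n}=\sum_{l=0}^{n-1}\bs K^{n-1-l}\bs r\bs a\bs s^T\bs K^l$; for $n\in\mb Z^-$, since $\bs K$ is invertible, by $\bs M_{x_n}=-\sum_{l=n}^{-1}\bs K^{n-1-l}\bs r\bs a\bs s^T\bs K^l$; and for $n=0$ by the empty sum $\bs M_{x_0}=\bs 0$, each verified in one line of telescoping. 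Substituting into $-\bs s^T\bs K^j\bs U\bs M_{x_n}\bs U\bs K^i\bs r$ and regrouping each summand as $(\bs s^T\bs K^j\bs U\bs K^{n-1-l}\bs r)\,\bs a\,(\bs s^T\bs K^l\bs U\bs K^i\bs r)=\bs S^{(n-1-l,j)}\bs a\bs S^{(i,l)}$ reproduces exactly the nonlinear sums of \eqref{Sij_moves1}, \eqref{Sij_moves2} and \eqref{Sij_moves3}, with the sign and the summation range distinguishing the three regimes.

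The one point needing real care, and the step I expect to be the main obstacle, is justifying that the actual derivative $\bs M_{x_n}$ equals the telescoping ansatz, because the operator $\bs X\mapsto\bs K\bs X-\bs X\bs K$ is not injective: its kernel is the commutant of $\bs K$. I would resolve this through the block anti-diagonal structure. Both $\bs M$ (hence $\bs M_{x_n}$) and the ansatz have the form $\begin{pmatrix}\bs 0&\ast\\ \ast&\bs 0\end{pmatrix}$, and an anti-diagonal matrix $\begin{pmatrix}\bs 0&\bs P\\ \bs Q&\bs 0\end{pmatrix}$ commuting with $\bs K=\mr{diag}(\bs K_1,\bs K_2)$ must satisfy $\bs K_1\bs P=\bs P\bs K_2$ and $\bs K_2\bs Q=\bs Q\bs K_1$, forcing $\bs P=\bs Q=\bs 0$ precisely because $\bs K_1$ and $\bs K_2$ share no eigenvalue. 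Thus the inhomogeneous equation has a unique anti-diagonal solution and the two expressions coincide. The remaining care is purely bookkeeping: tracking the placement of $\bs a$ and the direction of summation so that the three cases $n>0$, $n=0$, $n<0$ assemble correctly into \eqref{Sij_moves}.
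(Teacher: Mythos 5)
Your proposal is correct and follows essentially the same route as the paper's Appendix A: differentiate \eqref{Sij} directly, use the block structure of $\bs A,\bs r,\bs s$ to trade $\bs A$ for $\bs a$, obtain $\bs M_{x_n}$ from the differentiated Sylvester equation together with the uniqueness of its (block anti-diagonal) solution, and substitute back via $\bs U_{x_n}=-\bs U\bs M_{x_n}\bs U$. The only cosmetic difference is that the paper passes through the intermediate form $\bs M_{x_n}=\bs A(\bs K^n\bs M-\bs M\bs K^n)$ and cites the recurrences \eqref{recur_re}, whereas you verify the telescoping sums directly and spell out the kernel argument that the paper leaves implicit.
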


Formulae \eqref{Sij_moves1} and \eqref{Sij_moves3} have been derived in \cite{Zhao-2018}
and \eqref{Sij_moves2} can be obtained similarly. For the completeness of the paper,
a proof of Lemma \ref{lem-2-2} is given in Appendix \ref{A}.

Now we come to the first main result of this paper.

\begin{theorem}\label{thm1}
Let
%    For $(\bs K,\bs M,\bs r,\bs s)$ that satisfies system \eqref{Syl_eq}\eqref{rs_move}, if we define
    \begin{align}\label{def_uv}
        \bs u\doteq \bs S^{(0,0)}=\bs s^T(\bs I+\bs M)^{-1}\bs r, ~~
        \bs v\doteq \bs I_2-\bs S^{(-1,0)}=\bs I_2- \bs s^T(\bs I+\bs M)^{-1}\bs K^{-1}\bs r.
    \end{align}
Then $\bs u$ and $\bs v$  satisfy the following differential recurrence relation
    \begin{align}\label{recur_uv}
        \bs v_{x_{n+1}}\bs v^{-1}=-\bs u_{x_n},~~~ (n\in\mb Z).
    \end{align}
\end{theorem}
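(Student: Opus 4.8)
The plan is to establish the identity $\bs v_{x_{n+1}}\bs v^{-1}=-\bs u_{x_n}$ directly from the evolution formulae in Lemma~\ref{lem-2-2}, specialized to the particular matrices $\bs u=\bs S^{(0,0)}$ and $\bs v=\bs I_2-\bs S^{(-1,0)}$. The central observation is that both $\bs u$ and $\bs v$ are built from $\bs S^{(i,j)}$ with the second index $j=0$, so I expect the recursive relation \eqref{Sij_re} from Lemma~\ref{lem-2-1} to be the algebraic glue that converts the $x_{n+1}$-derivative of $\bs v$ into something matching the $x_n$-derivative of $\bs u$. First I would compute $\bs u_{x_n}$ by applying \eqref{Sij_moves1}--\eqref{Sij_moves3} with $i=j=0$, and separately compute $\bs v_{x_{n+1}}=-\bs S^{(-1,0)}_{x_{n+1}}$ by applying the same formulae with $i=-1$, $j=0$, and shift $n\mapsto n+1$.

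The key algebraic step is then to simplify the convolution sums. For $\bs u_{x_n}$ with $n\in\mb Z^+$, formula \eqref{Sij_moves1} gives
\begin{align*}
    \bs u_{x_n}=\bs S^{(n,0)}\bs a-\bs a\bs S^{(0,n)}-\sum_{l=0}^{n-1}\bs S^{(n-1-l,0)}\bs a\bs S^{(0,l)},
\end{align*}
while \eqref{Sij_moves1} applied to $\bs S^{(-1,0)}_{x_{n+1}}$ yields
\begin{align*}
    \bs S^{(-1,0)}_{x_{n+1}}=\bs S^{(n,0)}\bs a-\bs a\bs S^{(-1,n+1)}-\sum_{l=0}^{n}\bs S^{(n-l,0)}\bs a\bs S^{(-1,l)}.
\end{align*}
I would isolate the term in this second sum and use \eqref{Sij_re} in the form $\bs S^{(0,j)}\bs S^{(-1,0)}=\bs S^{(0,j)}-\bs S^{(-1,j+1)}$ (the $i=-1$ specialization) to rewrite products involving $\bs S^{(-1,l)}$. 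The goal is to show that $-\bs S^{(-1,0)}_{x_{n+1}}$ equals $\bs u_{x_n}$ multiplied on the right by $\bs v=\bs I_2-\bs S^{(-1,0)}$; equivalently, that $\bs v_{x_{n+1}}=-\bs u_{x_n}\bs v$, which upon right-multiplication by $\bs v^{-1}$ is the claim.

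The main obstacle, I anticipate, will be managing the index bookkeeping so that the telescoping of the two convolution sums is exact: the sum for $\bs u_{x_n}$ runs $l=0,\dots,n-1$ whereas the sum for $\bs S^{(-1,0)}_{x_{n+1}}$ runs $l=0,\dots,n$, so one must verify that the extra terms and the factor $\bs v$ conspire precisely. I would handle the three ranges of $n$ separately, since the sign of the convolution sum and its summation limits differ across \eqref{Sij_moves1}, \eqref{Sij_moves2}, and \eqref{Sij_moves3}; the case $n=0$ is the cleanest and serves as a sanity check, using \eqref{Sij_moves2} to get $\bs v_{x_1}=-[\bs S^{(-1,0)},\bs a]\cdot(\text{correction})$ and comparing with $-\bs u_{x_0}=-[\bs S^{(0,0)},\bs a]$.

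After establishing the $\bs v_{x_{n+1}}=-\bs u_{x_n}\bs v$ form, the final statement follows immediately since $\bs v$ is invertible (as $\bs I_2-\bs S^{(-1,0)}$ is a perturbation of the identity and the Sylvester data are generic). I would verify invertibility of $\bs v$ as a preliminary remark, then close the argument by right-multiplying by $\bs v^{-1}$.
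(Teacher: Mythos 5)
Your plan reproduces the paper's proof essentially verbatim: the paper likewise computes $\bs u_{x_n}$ and $\bs v_{x_{n+1}}$ from Lemma~\ref{lem-2-2} at $(i,j)=(0,0)$ and $(-1,0)$, uses the $i=-1$ case of \eqref{Sij_re} in the form $\bs S^{(-1,j+1)}=\bs S^{(0,j)}\bs v$ to telescope the convolution sums (and, at $j=-1$, to get the explicit inverse $\bs v^{-1}=\bs I_2+\bs S^{(0,-1)}$, so no genericity assumption is needed), and splits into the cases $n\ge 1$, $n=0$, $n=-1$, $n\le -2$. Two small slips to fix when writing it out: the target identity is $\bs S^{(-1,0)}_{x_{n+1}}=+\bs u_{x_n}\bs v$ (your two formulations of it differ by a sign, and only the version $\bs v_{x_{n+1}}=-\bs u_{x_n}\bs v$ is correct), and for $n=0$ the derivative $\bs v_{x_1}$ comes from \eqref{Sij_moves1} with $n=1$, while \eqref{Sij_moves2} supplies only $\bs u_{x_0}=[\bs u,\bs a]$.
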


\begin{proof}
First, the recursive formula \eqref{Sij_re} with $i=-1$ yields
\begin{equation}\label{S-v}
\bs S^{(-1,j+1)}=\bs S^{(0,j)}\bs v,
\end{equation}
which looks simple but will play a crucial role in the following proof.
It gives rise to (with $j=0$)
\begin{equation}\label{S-uv}
\bs u \bs v =\bs S^{(-1,1)}
\end{equation}
and (with $j=-1$)
\begin{equation}\label{S-v-1}
\bs v^{-1}=\bs I_2+\bs S^{(0,-1)}.
\end{equation}
In addition, the evolution relation \eqref{Sij_moves2} indicates (considering $(i,j)=(0,0)$ and $(-1,0)$)
\begin{equation}\label{u-x0}
\bs u_{x_{0}}=[\bs u, \bs a]
\end{equation}
and
\begin{equation}\label{v-x0}
\bs v_{x_{0}}=[\bs v, \bs a].
\end{equation}

Next, looking at \eqref{Sij_moves1} with $n=1$ and $(i,j)=(-1,0)$, and making use of
\eqref{S-uv} and \eqref{u-x0}, we have
\begin{align}
-\bs v_{x_1}=\bs u\bs a-\bs a\bs S^{(-1,1)}-\bs u\bs a(\bs I_2-\bs v)
=\bs u\bs a\bs v-\bs a\bs u\bs v=[\bs u,\bs a]\bs v=\bs u_{x_0}\bs v.
\end{align}
Similarly, with $n=-1$ and $(i,j)=(0,0)$ formula \eqref{Sij_moves3} gives rise to
\begin{align}
    \bs u_{x_{-1}}=\bs a-\bs v\bs a(\bs I_2+\bs S^{(0,-1)})=\bs a-\bs v\bs a\bs v^{-1}
    =(\bs a\bs v-\bs v\bs a)\bs v^{-1}=-\bs v_{x_0}\bs v^{-1},
\end{align}
where use has been made of \eqref{S-v-1} and \eqref{v-x0}.
These two equations cover the cases $n=0$ and $-1$ of \eqref{recur_uv}.

Next, we prove \eqref{recur_uv} for positive $n$. Formula \eqref{Sij_moves1}
gives rise to (with $i=j=0$)
\begin{align*}
        \bs u_{x_n}=\bs S^{(n,0)}\bs a-\bs a\bs S^{(0,n)}-\sum_{l=0}^{n-1}\bs S^{(n-1-l,0)}\bs a\bs S^{(0,l)}
    \end{align*}
and (with  $i=-1,j=0$)
\begin{align*}
-\bs v_{x_{n+1}}&=\bs S^{(n,0)}\bs a-\bs a\bs S^{(-1,n+1)}-\sum_{l=0}^n\bs S^{(n-l,0)}\bs a\bs S^{(-1,l)} \\
        &=\bs S^{(n,0)}\bs a-\bs a\bs S^{(0,n)}\bs v-\bs S^{(n,0)}\bs a\bs S^{(-1,0)}
        -\sum_{l=1}^n\bs S^{(n-l,0)}\bs a\bs S^{(-1,l)},
\end{align*}
where we have made use of \eqref{S-v} and separated the first term from the summation.
Then, in light of definition of $\bs v$, replacing the index $l$ with $l+1$
and using \eqref{S-v} once again, we arrive at
\begin{align*}
 -\bs v_{x_{n+1}}&=\bs S^{(n,0)}\bs a\bs v-\bs a\bs S^{(0,n)}\bs v
 -\sum_{l=0}^{n-1}\bs S^{(n-1-l,0)}\bs a\bs S^{(-1,l+1)} \\
        &=\bs S^{(n,0)}\bs a\bs v-\bs a\bs S^{(0,n)}\bs v-\sum_{l=0}^{n-1}\bs S^{(n-1-l,0)}\bs a\bs S^{(0,l)}\bs v
         =\bs u_{x_n}\bs v,
    \end{align*}
which is  \eqref{recur_uv} with $n\geq 1$.

The case of $n$ less than $-1$ can be proved similarly from \eqref{Sij_moves3}.
In details, we have
\[  \bs u_{x_n}=\bs S^{(n,0)}\bs a-\bs a\bs S^{(0,n)}+\sum_{l=-1}^n\bs S^{(n-1-l,0)}\bs a\bs S^{(0,l)}\]
and
\begin{align*}
-\bs v_{x_{n+1}}&=\bs S^{(n,0)}\bs a-\bs a\bs S^{(-1,n+1)}+\sum_{l=-1}^{n+1}\bs S^{(n-l,0)}\bs a\bs S^{(-1,l)} \\
        &=\bs S^{(n,0)}\bs a\bs v+\bs S^{(n,0)}\bs a\bs S^{(-1,0)}-\bs a\bs S^{(0,n)}\bs v
        +\sum_{l=-1}^{n+1}\bs S^{(n-l,0)}\bs a\bs S^{(-1,l)} \\
        &=\bs S^{(n,0)}\bs a\bs v-\bs a\bs S^{(0,n)}\bs v+\sum_{l=0}^{n+1}\bs S^{(n-l,0)}\bs a\bs S^{(-1,l)} \\
        &=\bs S^{(n,0)}\bs a\bs v-\bs a\bs S^{(0,n)}\bs v+\sum_{l=-1}^{n}\bs S^{(n-1-l,0)}\bs a\bs S^{(-1,l+1)} \\
        &=\bs S^{(n,0)}\bs a\bs v-\bs a\bs S^{(0,n)}\bs v+\sum_{l=-1}^{n}\bs S^{(n-1-l,0)}\bs a\bs S^{(0,l)}\bs v
        =\bs u_{x_n}\bs v.
\end{align*}

Thus, we have proved \eqref{recur_uv} for all $n\in \mathbb{Z}$.

\end{proof}

Considering the compatibility $(\bs u_{x_n})_{x_m}=(\bs u_{x_m})_{x_n}$,
we immediately arrive at the following.

\begin{theorem}\label{thm2}
For $\bs v\doteq \bs I_2-\bs S^{(-1,0)}=\bs I_2- \bs s^T(\bs I+\bs M)^{-1}\bs K^{-1}\bs r$
where $\bs K, \bs M, \bs r, \bs s$ satisfy the Sylvester equation \eqref{Syl_eq} and dispersion relation \eqref{rs_move},
the following relation holds,
\begin{align}\label{SDYM-3}
    (\bs v_{x_{n+1}}\bs v^{-1})_{x_m}-(\bs v_{x_{m+1}}\bs v^{-1})_{x_n}=0,
\end{align}
where $n,m\in \mathbb{Z}$.
\end{theorem}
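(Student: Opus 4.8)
The plan is to reduce the claimed identity to the symmetry of mixed partial derivatives of the single matrix $\bs u$, exploiting the differential recurrence relation established in Theorem \ref{thm1}. The key observation is that $\bs v$ enters \eqref{SDYM-3} only through the combination $\bs v_{x_{n+1}}\bs v^{-1}$, which Theorem \ref{thm1} expresses purely in terms of $\bs u$.

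First I would invoke Theorem \ref{thm1}, which asserts $\bs v_{x_{n+1}}\bs v^{-1}=-\bs u_{x_n}$ for every $n\in\mb Z$; replacing $n$ by $m$ gives likewise $\bs v_{x_{m+1}}\bs v^{-1}=-\bs u_{x_m}$. Substituting both expressions into the left-hand side of \eqref{SDYM-3} eliminates the $\bs v$-dependence entirely and recasts the quantity as
\begin{align}
(\bs v_{x_{n+1}}\bs v^{-1})_{x_m}-(\bs v_{x_{m+1}}\bs v^{-1})_{x_n}
=-(\bs u_{x_n})_{x_m}+(\bs u_{x_m})_{x_n}.
\end{align}

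Next I would argue that the two remaining second-order derivatives coincide. The entries of $\bs u=\bs s^T(\bs I+\bs M)^{-1}\bs r$ are built from $\bs r$ and $\bs s$, whose dependence on each coordinate $x_k$ is governed by the linear dispersion relation \eqref{rs_move} and is therefore smooth (indeed exponential) in every variable, while $\bs M$ is the unique solution of the Sylvester equation \eqref{Syl_eq} and depends rationally on these smooth data. Hence $\bs u$ is a smooth function of the independent variables $x_n$ and $x_m$, so the symmetry of mixed partial derivatives applies entrywise and yields $(\bs u_{x_n})_{x_m}=(\bs u_{x_m})_{x_n}$. The right-hand side above then vanishes, establishing \eqref{SDYM-3}.

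The only step requiring care—and the point I expect to be the main, though mild, obstacle—is the justification of the equality of mixed partials, i.e.\ verifying that $\bs u$ is genuinely twice continuously differentiable in the relevant variables; once the smoothness of $\bs u$ is granted, the conclusion is immediate. As an alternative that avoids appealing to a general regularity theorem, one could instead establish the commutation $(\bs u_{x_n})_{x_m}=(\bs u_{x_m})_{x_n}$ directly from the explicit evolution formulae \eqref{Sij_moves}, but this is not needed for the argument.
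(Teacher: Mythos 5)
Your proposal is correct and follows essentially the same route as the paper, which derives \eqref{SDYM-3} immediately from the relation $\bs v_{x_{n+1}}\bs v^{-1}=-\bs u_{x_n}$ of Theorem \ref{thm1} together with the compatibility $(\bs u_{x_n})_{x_m}=(\bs u_{x_m})_{x_n}$. Your additional remarks on the smoothness of $\bs u$ merely make explicit the regularity that the paper takes for granted.
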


As a by-product of \eqref{SDYM-3}, $\bs u$ satisfies a potential SDYM equation (cf.\cite{Lez-1987})
\begin{align}\label{PSDYM-3}
    \bs u_{x_{n},x_{m+1}}-\bs u_{x_{m},x_{n+1}}-[\bs u_{x_n},\bs u_{x_m}]=0.
\end{align}

Note that equation \eqref{SDYM-3} differs from the SDYM equation \eqref{SDYM} by a sign ``$-$''.
In next section, we will recover  \eqref{SDYM} from \eqref{SDYM-3} by imposing certain reductions.
Besides, apart from $\{x_n\}$ with  dispersion relation \eqref{rs_move}, we may introduce for $\{y_m\}$
such that
\begin{align}\label{rs_move-y}
    \bs r_{y_m}=\bs A(-\bs{K})^m\bs r,~~ \bs s_{y_m}=\bs A(-\bs K^T)^m\bs s, ~~(m\in\mb Z).
\end{align}
The resulting equation is
\begin{align}\label{SDYM-3y}
    (\bs v_{x_{n+1}}\bs v^{-1})_{y_m}+(\bs v_{y_{m+1}}\bs v^{-1})_{x_n}=0,
\end{align}
which is in a same form as \eqref{SDYM}.
In next section, we will work on \eqref{SDYM-3} and reduce it to \eqref{SDYM}.

\section{Solutions to the SU(2) SDYM equation}\label{sec-4}

When the gauge group is of $\mf{SU}(2)$,
$J$ in the SDYM equation \eqref{SDYM}
should be a positive-definite Hermitian matrix with $|J|=1$ \cite{Po-1980,Yang-1977}.
%Note that in 2 by 2 case if $J$ is Hermitian and $|J|=1$, it is either positive-definite or negative-definite.
%In case of negative-definite one can always replace $J$ by $-J$ so that it is  positive-definite.
In the following we will look for a  Hermitian matrix $\bs v$ with $|\bs v|=1$.
This will be able to be achieved by imposing some constraints on $\bs K$ and $\{x_n\}$.

The Sylvester equation \eqref{Syl_eq} can be written as a more explicit form
\begin{subequations}\label{syl-2}
\begin{align}
& \bs K_1 \bs M_1- \bs M_1 \bs K_2=\bs r_1 \bs s^T_2, \label{syl-2a}\\
& \bs K_2 \bs M_2- \bs M_2 \bs K_1=\bs r_2 \bs s^T_1. \label{syl-2b}
\end{align}
\end{subequations}
For given invertible $\bs K_1$ and $\bs K_2$ that do not share any eigenvalues, $\bs M_1$ and $\bs M_2$
can be uniquely solved. In practice, since $\bs S^{(i,j)}$ is invariant with respect to $\bs K_1, \bs K_2$
and any matrices similar to them \cite{Zhao-2018} (cf.\cite{XZZ-2014,ZZ-2013}),
we can always consider $\bs K_1$ and $\bs K_2$ to be their canonical forms, say $\bs \Gamma$ and $\bs \Lambda$.
Solutions $\bs M_i$ together with $\bs r_i$ and $\bs s_i$ can be explicitly presented.
One may refer \cite{Zhao-2018} or Appendix \ref{B} of the present paper.
It turns out that these solutions can be presented via the following form:
\begin{align}\label{constructions}
    \bs M_1=\bs F_1\bs G_1\bs H_2, ~~ \bs M_2=\bs F_2\bs G_2\bs H_1, ~~ \bs r_1=\bs F_1\bs E_1,
    ~~\bs r_2=\bs F_2\bs E_2, ~~ \bs s_1=\bs H_1\bs E_1,~~ \bs s_2=\bs H_2\bs E_2,
\end{align}
and these elements satisfy (symmetric or commutative) relations
\begin{subequations}\label{GH}
\begin{align}
   & \bs G_1=-\bs G_2^T, ~~ \bs F_1\bs \Gamma=\bs \Gamma\bs F_1, ~~ \bs F_2\bs \Lambda=\bs \Lambda\bs F_2,
    ~~ \bs\Gamma\bs H_1=\bs H_1\bs\Gamma^T, ~~ \bs\Lambda\bs H_2=\bs H_2\bs\Lambda^T,
\\
   & \bs H_i^T=\bs H_i, ~~ (\bs H_i\bs F_i)^T=\bs F_i^T\bs H_i=\bs H_i\bs F_i, ~~ i=1,2,
\end{align}
\end{subequations}
where $\bs F_i$ and $\bs H_i$ are $N_i\times N_i$ matrices, $\bs G_1$ is a  $N_1\times N_2$ matrix,
$\bs G_2$ is a  $N_2\times N_1$ matrix, and $\bs E_i$ is a  $N_i$-th order column vector, $i=1,2$.

With the above notations, we are able to investigate symmetric property of the infinite matrix $\bs S$,
i.e. the relations between  $\bs S^{(i,j)}$ and  $\bs S^{(j,i)}$.

\begin{lemma}\label{lem-4-1}
$\{\bs S^{(i,j)}\}$ defined by \eqref{Sij}
with $\bs K_1=\bs \Gamma$, $\bs K_2=\bs \Lambda$, $\bs M, \bs r, \bs s$ satisfying the Sylvester equation \eqref{Syl_eq},
the elements in  $\bs S^{(i,j)}$ and  $\bs S^{(j,i)}$ are related as the following,
\begin{align}\label{sym}
    s_1^{(i,j)}=-s_4^{(j,i)}, ~~ s_2^{(i,j)}=s_2^{(j,i)}, ~~s_3^{(i,j)}=s_3^{(j,i)},~~ i,j \in \mathbb{Z},
\end{align}
i.e. ${\bs S^{(i,j)}}^T=-\sigma_2\,  \bs S^{(j,i)} \, \sigma_2$.
\end{lemma}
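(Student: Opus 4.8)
The plan is to prove the compact reformulation ${\bs S^{(i,j)}}^T=-\sigma_2\,\bs S^{(j,i)}\,\sigma_2$ stated in the lemma rather than the three scalar identities one at a time. A direct check of the $2\times2$ identity shows that its $(2,1)$ and $(1,2)$ entries are exactly $s_2^{(i,j)}=s_2^{(j,i)}$ and $s_3^{(i,j)}=s_3^{(j,i)}$, its $(1,1)$ entry is $s_1^{(i,j)}=-s_4^{(j,i)}$, and its $(2,2)$ entry is the same cross relation with $i,j$ interchanged. Handling the single matrix identity keeps the two sectors and the sign pattern bookkept automatically and lets me transpose the whole expression $\bs S^{(i,j)}=\bs s^T\bs K^j(\bs I+\bs M)^{-1}\bs K^i\bs r$ in one stroke.

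Next I would feed the parametrisation \eqref{constructions} into the block matrices and read off the global symmetries induced by \eqref{GH}. Writing $\bs H=\mathrm{diag}(\bs H_1,\bs H_2)$, $\bs F=\mathrm{diag}(\bs F_1,\bs F_2)$ and the off-diagonal $\bs G=\left(\begin{smallmatrix}\bs 0&\bs G_1\\ \bs G_2&\bs 0\end{smallmatrix}\right)$, the relations \eqref{GH} give $\bs H^T=\bs H$, $\bs G^T=-\bs G$ (from $\bs G_1=-\bs G_2^T$), a symmetric combination $\bs W:=\bs H\bs F=\bs W^T$, the commutation $\bs F\bs K=\bs K\bs F$, and the conjugation $\bs K^T=\bs H^{-1}\bs K\bs H$ (from $\bs\Gamma\bs H_1=\bs H_1\bs\Gamma^T$ and $\bs\Lambda\bs H_2=\bs H_2\bs\Lambda^T$). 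Since $\bs r=\bs F\bs E$ and $\bs s=\bs H\widetilde{\bs E}$ with $\bs E=\mathrm{diag}(\bs E_1,\bs E_2)$ and $\widetilde{\bs E}=\bs E\sigma_1$, pushing $\bs F$ through the resolvent via $(\bs I+\bs M)^{-1}\bs F=\bs F(\bs I+\bs G\bs W)^{-1}$ and $\bs F\bs K=\bs K\bs F$ recasts the object in the balanced form $\bs S^{(i,j)}=\widetilde{\bs E}^T\bs W\bs K^j(\bs I+\bs G\bs W)^{-1}\bs K^i\bs E$. This also clarifies the role of $\sigma_2$: right multiplication by $\sigma_2$ turns the anti-block-diagonal $\bs s$ into a block-diagonal matrix of the same shape as $\bs r$, so the $\sigma_2$-conjugation in the target is precisely the operator that swaps the two sectors.

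I would then transpose the balanced form, using $\bs W^T=\bs W$ and $\bs G^T=-\bs G$, to get ${\bs S^{(i,j)}}^T=\bs E^T(\bs K^T)^i(\bs I-\bs W\bs G)^{-1}(\bs K^T)^j\bs W\,\widetilde{\bs E}$, and transport this back to $-\sigma_2\,\bs S^{(j,i)}\,\sigma_2$ by converting $\bs K^T=\bs H^{-1}\bs K\bs H$, flipping the resolvent $\bs I-\bs W\bs G$ back to $\bs I+\bs G\bs W$, and trading $\widetilde{\bs E}$ and $\bs E$ through $\widetilde{\bs E}=\bs E\sigma_1$.

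The hard part will be that this transport does not close under \eqref{GH} alone: conjugating $\bs K^T$ back to $\bs K$ by $\bs H$ clashes with the $\bs H$ hidden inside $\bs W=\bs H\bs F$, leaving spurious factors $\bs H^{\pm2}$ that no naive similarity can absorb, and indeed the symmetry genuinely breaks if one keeps only $\bs G_1=-\bs G_2^T$ and drops the Sylvester constraint on $\bs G$. The decisive step is therefore to invoke the Sylvester equation \eqref{syl-2} itself, which in the balanced variables reads $\bs K\bs G-\bs G(\bs H\bs K\bs H^{-1})=\bs E\widetilde{\bs E}^T$ and pins down the Cauchy structure of $\bs G$; it is this intertwining, together with $\bs G^T=-\bs G$ and $\bs W^T=\bs W$, that cancels the leftover $\bs H$-conjugations and supplies the correct sign when the resolvent is flipped. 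Keeping the change $\bs I+\bs G\bs W\mapsto\bs I-\bs W\bs G$ synchronised with the $\sigma_1\to\sigma_2$ reshuffling is the delicate bookkeeping; once it is carried out the transposed expression collapses onto $-\sigma_2\,\bs S^{(j,i)}\,\sigma_2$, yielding the three relations \eqref{sym} at once.
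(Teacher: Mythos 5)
Your global reformulation is sound as far as it goes: the entry\--by\--entry reading of ${\bs S^{(i,j)}}^T=-\sigma_2\,\bs S^{(j,i)}\,\sigma_2$ is correct, the block assembly $\bs M=\bs F\bs G\bs H$, $\bs r=\bs F\bs E$, $\bs s=\bs H\bs E\sigma_1$ with $\bs G^T=-\bs G$, $\bs W^T=\bs W$ faithfully repackages \eqref{constructions}--\eqref{GH}, and the passage to the balanced form $\bs S^{(i,j)}=\wti{\bs E}^T\bs W\bs K^j(\bs I+\bs G\bs W)^{-1}\bs K^i\bs E$ is legitimate. Up to that point you are doing exactly what the paper does --- substitute the parametrisation, exploit the symmetries, transpose --- only for the $2\times2$ matrix at once rather than for the entries $s_1,s_2,s_3$ separately.

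The gap is in the final step, which is the only step that actually proves the lemma. You correctly flag that after transposing, the factors $(\bs K^T)^i,(\bs K^T)^j$ must be conjugated back to $\bs K^i,\bs K^j$ by $\bs H$ and that this clashes with the $\bs H$ sitting inside $\bs W=\bs H\bs F$; but your proposed resolution --- invoke the intertwining $\bs K\bs G-\bs G(\bs H\bs K\bs H^{-1})=\bs E\wti{\bs E}^T$ and assert that it ``cancels the leftover $\bs H$-conjugations'' --- is never carried out, and it is not the mechanism the paper uses. In the paper's proof the Cauchy/Sylvester structure of $\bs G$ is never invoked: the argument closes because (i) the core $(\bs H_2\bs F_2)^{-1}-\bs G_2\bs H_1\bs F_1\bs G_1$ (and its $1$-sector partner) is symmetric, a consequence of $\bs G_1=-\bs G_2^T$, $\bs H_i^T=\bs H_i$, $(\bs H_i\bs F_i)^T=\bs H_i\bs F_i$ alone, and (ii) the powers $\bs\Gamma^i,\bs\Lambda^j$ are transported across that symmetric core using $\bs F_i\bs K_i=\bs K_i\bs F_i$ together with $\bs\Gamma\bs H_1=\bs H_1\bs\Gamma^T$, $\bs\Lambda\bs H_2=\bs H_2\bs\Lambda^T$. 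To complete your version you would have to show explicitly how the leftover $\bs H^{\pm2}$ factors disappear and how the sign flip $(\bs I-\bs W\bs G)^{-1}\to(\bs I+\bs G\bs W)^{-1}$ is produced in tandem with the $\sigma_2$-conjugation (the flip is conjugation by $\bs A=\mf{diag}(\bs I_{N_1},-\bs I_{N_2})$, which acts on the $2\times2$ indices as $\sigma_3$, and $-\sigma_2(\cdot)\sigma_2=(\sigma_3\sigma_1)(\cdot)(\sigma_3\sigma_1)$ combines this with the $\sigma_1$ coming from $\wti{\bs E}=\bs E\sigma_1$). As written, ``once it is carried out the transposed expression collapses'' asserts the conclusion rather than deriving it, so the decisive computation is missing.
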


\begin{proof}
Making use of expressions \eqref{constructions} and relations \eqref{GH},
from \eqref{Sij_element} we find
\begin{align*}
            s_1^{(i,j)}&
            =-\bs s_2^T\bs \Lambda^j(\bs I_{N_2}-\bs M_2\bs M_1)^{-1}\bs M_2\bs \Gamma^i\bs r_1 \\
            &=(-\bs E^T_{2}(\bs \Lambda^T)^j((\bs H_2\bs F_2)^{-1}
            -\bs G_2\bs H_1\bs F_1\bs G_1)^{-1}\bs G_2\bs H_1\bs\Gamma^i\bs F_1\bs E_{1})^T \\
            &=\bs E_{1}^T\bs F_1^T(\bs\Gamma^T)^i\bs H_1\bs G_1((\bs H_2\bs F_2)^{-1}
            -\bs G_2\bs H_1\bs F_1\bs G_1)^{-1}\bs \Lambda^j\bs E_{2} \\
            &=\bs E_{1}^T\bs H_1\bs\Gamma^i\bs F_1\bs G_1\bs H_2
            (\bs I_{N_2}-\bs F_2\bs G_2\bs H_1\bs F_1\bs G_1\bs H_2)^{-1}
            \bs \Lambda^j\bs F_2\bs E_{2} \\
            &=\bs s_1^T\bs\Gamma^i\bs M_1(\bs I_{N_2}-\bs M_2\bs M_1)^{-1}\bs \Lambda^j\bs r_2
            =-s_4^{(j,i)},
\end{align*}
and
\begin{align*}
            s_2^{(i,j)}&=\bs s_2^T\bs \Lambda^j(\bs I_{N_2}-\bs M_2\bs M_1)^{-1}\bs \Lambda^i\bs r_2 \\
            &=\bs E^T_{2}\bs H_2\bs \Lambda^j(\bs I_{N_2}-\bs F_2\bs G_2\bs H_1\bs F_1\bs G_1\bs H_2)^{-1}
            \bs \Lambda^i\bs F_2\bs E_{2} \\
            &=(\bs E^T_{2}(\bs \Lambda^T)^j((\bs H_2\bs F_2)^{-1}-\bs G_2\bs H_1\bs F_1\bs G_1)^{-1}
            \bs \Lambda^i\bs E_{2})^T \\
            &=\bs E^T_{2}(\bs \Lambda^T)^i((\bs H_2\bs F_2)^{-1}-\bs G_2\bs H_1\bs F_1\bs G_1)^{-1}
            \bs \Lambda^j\bs E_{2} \\
            &=\bs s_2^T\bs \Lambda^i(\bs I_{N_2}-\bs M_2\bs M_1)^{-1}\bs \Lambda^j\bs r_2
            =s_2^{(j,i)}.
        \end{align*}
The third relation $s_3^{(i,j)}=s_3^{(j,i)}$ can be proved in a similar way.

\end{proof}

With this lemma we are able to prove $|\bs v|=1$.
In fact, the relation \eqref{S-v-1} yields
\begin{align*}
    (1+s_1^{(0,-1)})(1-s_1^{(-1,0)})-s_2^{(0,-1)}s_3^{(-1,0)}=1.
\end{align*}
Then, by Lemma \ref{lem-4-1} we may replace $s_1^{(0,-1)}$ and $s_2^{(0,-1)}$
using \eqref{sym} and the resulting equation gives rise to
\begin{align*}
\begin{vmatrix}
        1-s_1^{(-1,0)} & -s_2^{(-1,0)}\\
        -s_3^{(-1,0)} & 1-s_4^{(-1,0)}
    \end{vmatrix}
    =|\bs I_2-\bs S^{(-1,0)}|=|\bs v|=1.
\end{align*}

Next, we make $\bs v$ to be a Hermitian matrix by imposing constraints.
First, we introduce
\begin{align}\label{coor}
z_n\doteq x_n=\xi_n+\I \eta_n, ~~ \bar{z}_n \doteq  (-1)^{n+1}x_{-n}=\xi_n-\I \eta_n,
~~ n=1,2,\cdots
\end{align}
where  $ \xi_n, \eta_n \in\mb R$.
This indicates $\bar{z}_n=z^*_n$.
Then we take $m=-n-1$. The resulting equation \eqref{SDYM-3} reads
\begin{align}\label{SDYM-4}
    (\bs v_{z_{n+1}}\bs v^{-1})_{\bar{z}_{n+1}}+(\bs v_{\bar{z}_n}\bs v^{-1})_{z_n}=0, ~~ n=1,2,\cdots,
\end{align}
which coincides with the form \eqref{SDYM}.
We next introduce further constraints by $N_2=N_1$ and
\begin{equation}
\bs K_2=-(\bs K_1^*)^{-1}
\end{equation}
to the Cauchy matrix scheme (\ref{Syl_eq},\ref{rs_move},\ref{KAM}).
Equation \eqref{coor} implies
\begin{equation}
\partial_{\xi_n}=\partial_{z_n}+\partial_{\bar{z}_n},~~
\partial_{\eta_n}=\mathrm{i} (\partial_{z_n}-\partial_{\bar{z}_n}),
\end{equation}
and the dispersion relation \eqref{rs_move} is equivalently written in terms of $\xi_n$ and $\eta_n$ as
\begin{subequations}
\begin{align}
& \partial_{\xi_n} \bs r_1=(\bs K_1^n+(-1)^{n+1}\bs K_1^{-n})\bs r_1, ~~~
\partial_{\eta_n} \bs r_1= \mathrm{ i}(\bs K_1^n-(-1)^{n+1}\bs K_1^{-n})\bs r_1,\\
&  \partial_{\xi_n} \bs r_2=((\bs K_1^*)^n+(-1)^{n+1}(\bs K_1^*)^{-n})\bs r_2, ~~~
\partial_{\eta_n} \bs r_2= -\mathrm{ i}((\bs K_1^*)^n-(-1)^{n+1}(\bs K_1^*)^{-n})\bs r_2,
\end{align}
\end{subequations}
and
\begin{subequations}
\begin{align}
& \partial_{\xi_n} \bs s_1=((\bs K_1^T)^n+(-1)^{n+1}(\bs K_1^T)^{-n})\bs s_1, ~~~
\partial_{\eta_n} \bs s_1= \mathrm{ i}((\bs K_1^T)^n-(-1)^{n+1}(\bs K_1^T)^{-n})\bs s_1,\\
&  \partial_{\xi_n} \bs s_2=((\bs K_1^\dagger)^n+(-1)^{n+1}(\bs K_1^\dagger)^{-n})\bs s_2, ~~~
\partial_{\eta_n} \bs s_2= -\mathrm{ i}((\bs K_1^\dagger)^n-(-1)^{n+1}(\bs K_1^\dagger)^{-n})\bs s_2.
\end{align}
\end{subequations}
We are able to take\footnote{This means the reduction \eqref{coor} is allowed.}

\[\bs r_2=(\bs K_1^*)^{-1}\bs r_1^*,~~ \bs s_2=-\delta (\bs K_1^\dagger)^{-1} \bs s_1^*, \]
where $\delta=\pm 1$, such that
\[\bs M_2=\delta \bs M_1^*\]
in light of the uniqueness of solutions of the Sylvester equations \eqref{syl-2}.
Denote $\bs v= \Bigl(\begin{smallmatrix} v_1& v_2\\v_3 &  v_4 \end{smallmatrix}\Bigr)$.
Then, by direct calculation we find
\begin{align*}
    v_1^*&=1-(s_1^{(-1,0)})^*
    =1+(s_2^T)^*\bs M^*_2(\bs I_{N_2}-\bs M_1^*\bs M_2^*)^{-1}(\bs K_1^*)^{-1}\bs r^*_1 \\
    &=1-\bs s_1^T\bs K_1^{-1}\bs M_1(\bs I_{N_2}-\bs M_2\bs M_1)^{-1}\bs r_2
    =1+s_4^{(0,-1)}
    =1-s_1^{(-1,0)}=v_1,
\end{align*}
and $v^*_4=v_4$  in a similar way way, and
\begin{align*}
    v_2^*&=-(s_2^{(-1,0)})^*
    =-(\bs s_2^T)^*(\bs I_{N_2}-\bs M_2^*\bs M_1^*)^{-1}(\bs K_2^*)^{-1}\bs r^*_2 \\
    &=-\delta\bs s_1^T\bs K_1^{-1}(\bs I_{N_2}-\bs M_1\bs M_2)^{-1}\bs r_1
    =-\delta s_3^{(0,-1)}=-\delta s_3^{(-1,0)}=\delta v_3.
\end{align*}
All these together indicate $\bs v=\bs v^{\dagger}$ when we take  $\delta=1$,
i.e. $\bs v$ is a Hermitian matrix when $\delta=1$.

We end up the section with the following summarization.

\begin{theorem}\label{thm3}
The SDYM equation \eqref{SDYM-4} has the following solutions
\begin{equation}\label{v-su2}
\bs v= \bs I_2-\bs S^{(-1,0)}=
\begin{pmatrix}
    1-s_1^{(-1,0)} & -(s_3^{(-1,0)})^* \\
    -s_3^{(-1,0)} & 1-s_4^{(-1,0)}
\end{pmatrix},
\end{equation}
where
\begin{align}\label{coor2}
z_n =\xi_n+\I \eta_n, ~~ \bar{z}_n =z^*_n=\xi_n-\I \eta_n,
~~ n=1,2,\cdots,
\end{align}
$ \xi_n$ and $\eta_n$ are real, and
\bsb\label{s-v}
\begin{align}
    s_1^{(-1,0)}&=\bs s_1^\dagger (\bs K_1^*)^{-1}(\bs I_N-\bs M_1^*\bs M_1)^{-1}\bs M_1^*\bs K_1^{-1}\bs r_1, \\
    s_3^{(-1,0)}&=\bs s^T_1(\bs I_N-\bs M_1\bs M_1^*)^{-1}\bs K_1^{-1}\bs r_1, \\
    s_4^{(-1,0)}&=\bs s^T_1\bs M_1(\bs I_N-\bs M_1^*\bs M_1)^{-1}\bs r_1^*.
\end{align}
\esb
Here, $\bs K_1\in \mathbb{C}_{N\times N}$, $\bs K_1$ and $-(\bs K_1^*)^{-1}$ do not share any eigenvalues,
$\bs M_1, \bs r_1$ and $\bs s_1$ are determined by the system
\begin{subequations}\label{Syl-Th}
\begin{align}
& \bs K_1 \bs M_1+ \bs M_1 (\bs K_1^*)^{-1}=-\bs r_1 \bs s_1^\dagger (\bs K_1^*)^{-1}, \label{syl-tha}\\
& \partial_{\xi_n} \bs r_1=(\bs K_1^n+(-1)^{n+1}\bs K_1^{-n})\bs r_1, ~~~
\partial_{\eta_n} \bs r_1= \mathrm{ i}(\bs K_1^n-(-1)^{n+1}\bs K_1^{-n})\bs r_1,\\
& \partial_{\xi_n} \bs s_1=((\bs K_1^T)^n+(-1)^{n+1}(\bs K_1^T)^{-n})\bs s_1, ~~~
\partial_{\eta_n} \bs s_1=\mathrm{ i}((\bs K_1^T)^n-(-1)^{n+1}(\bs K_1^T)^{-n})\bs s_1,
\end{align}
\end{subequations}
for $n=1,2,\cdots$. Solution $\bs v$ satisfies $\bs v=\bs v^\dagger$ and $|\bs v|=1$.
$\bs v$ is piecewisely positive-definite or negative-definite, depending on the domains where
$\bs v$ is positive or negative.
\end{theorem}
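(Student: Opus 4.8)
The plan is to read this theorem as a consolidation of the constructions in Sections \ref{sec-3} and \ref{sec-4}, and to prove it in four stages: that $\bs v$ solves \eqref{SDYM-4}, that it admits the closed form \eqref{s-v}, that it is Hermitian, and that it has unit determinant with the stated definiteness. The first stage is essentially free. Theorem \ref{thm2} already gives \eqref{SDYM-3} for $\bs v = \bs I_2 - \bs S^{(-1,0)}$ for all integers $n,m$; under the identification \eqref{coor2} and the specialization $m = -n-1$ the two terms of \eqref{SDYM-3} become the two terms of \eqref{SDYM-4}. So $\bs v$ solves the SDYM equation before any reality condition is imposed, and all the genuine content lies in cutting out the $\mathbf{SU}(2)$ slice.

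For the second stage I would impose $N_2 = N_1$ and $\bs K_2 = -(\bs K_1^*)^{-1}$ and take $\bs r_2 = (\bs K_1^*)^{-1}\bs r_1^*$, $\bs s_2 = -\delta(\bs K_1^\dagger)^{-1}\bs s_1^*$ with $\delta = \pm 1$. The first thing to check is admissibility: one conjugates the $\partial_{\xi_n}, \partial_{\eta_n}$ flows for $\bs r_1, \bs s_1$ in \eqref{Syl-Th} and verifies, using that $\bs K_1^*$ commutes with its own powers, that the defining relations $\bs r_2 = (\bs K_1^*)^{-1}\bs r_1^*$ and $\bs s_2 = -\delta(\bs K_1^\dagger)^{-1}\bs s_1^*$ are preserved under each flow; this is the point of the footnote. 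Once this holds, substituting the ansatz into the split Sylvester equations \eqref{syl-2} and invoking uniqueness of the solution forces $\bs M_2 = \delta \bs M_1^*$. Inserting $\bs K_2, \bs M_2, \bs r_2, \bs s_2$ into the scalar formulas \eqref{Sij_element} and re-expressing everything through $\bs K_1, \bs M_1, \bs r_1, \bs s_1$ then yields \eqref{s-v}.

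The third stage, Hermiticity, is the direct computation indicated just before the theorem: with $\bs M_2 = \delta \bs M_1^*$ and the reduced $\bs r_2, \bs s_2$ one conjugates and transposes each scalar entry to obtain $v_1^* = v_1$, $v_4^* = v_4$ and $v_2^* = \delta v_3$, using the symmetry relations of Lemma \ref{lem-4-1} (for instance $s_4^{(0,-1)} = -s_1^{(-1,0)}$ and $s_3^{(0,-1)} = s_3^{(-1,0)}$) to close the identities; choosing $\delta = 1$ gives $\bs v = \bs v^\dagger$. For the fourth stage, $|\bs v| = 1$ is precisely the determinant identity obtained from \eqref{S-v-1} and Lemma \ref{lem-4-1}, provided one first checks that the reduced data still fit the symmetric structure \eqref{constructions}--\eqref{GH} under which that lemma was proved. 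The definiteness claim then follows from linear algebra: a $2 \times 2$ Hermitian matrix with determinant $1$ has two real eigenvalues with product $1$, hence of common sign, so $\bs v$ is positive- or negative-definite at each point; since $\det \bs v = 1$ keeps the entry $v_1$ from ever vanishing, the sign of $v_1$ is locally constant and can only flip across singularities of $\bs v$, which is exactly the piecewise statement.

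The hard part will be the middle two stages taken together: verifying that the reality reduction is consistent both with the dispersion relations and with the symmetric structure needed by Lemma \ref{lem-4-1}, and then tracking $\bs M_2 = \delta \bs M_1^*$ through \eqref{Sij_element}. The bookkeeping is delicate because conjugation swaps $\bs M_1 \bs M_1^*$ for $\bs M_1^* \bs M_1$, turns $(\bs I_{N_2} - \bs M_2 \bs M_1)^{-1}$ into a resolvent in $\bs M_1^* \bs M_1$, and moves $\bs K_1$-powers from one side of the resolvent to the other, so matching the raw expressions to the clean forms \eqref{s-v} and to the Hermiticity identities requires careful handling of transposes and of index placement.
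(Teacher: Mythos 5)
Your proposal is correct and follows essentially the same route as the paper: Theorem \ref{thm2} plus the identification \eqref{coor2} with $m=-n-1$ gives equation \eqref{SDYM-4}; the constraints $\bs K_2=-(\bs K_1^*)^{-1}$, $\bs r_2=(\bs K_1^*)^{-1}\bs r_1^*$, $\bs s_2=-\delta(\bs K_1^\dagger)^{-1}\bs s_1^*$ with uniqueness of the Sylvester solution force $\bs M_2=\delta\bs M_1^*$ and yield \eqref{s-v} and Hermiticity for $\delta=1$; and $|\bs v|=1$ comes from \eqref{S-v-1} combined with Lemma \ref{lem-4-1}. Your explicit eigenvalue argument for the piecewise definiteness is a small addition the paper leaves implicit, but it does not change the approach.
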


Explicit solutions of the system \eqref{Syl-Th} can be formulated from Appendix \ref{B}.
Some examples of solution $\bs v$ will be listed in Appendix \ref{C}.

\section{Concluding remarks}\label{sec-5}

In this paper we have constructed explicit solution $\bs v$ for the SDYM equation \eqref{SDYM-4}
by using a direct method, namely, the Cauchy matrix approach.
We started with the Sylvester equation \eqref{Syl_eq} together with the dispersion relation \eqref{rs_move}
with respect to infinite coordinates $\{x_n\}$,
made use of the recursive relation  \eqref{Sij_re},
proved $\bs u$ and $\bs v$ satisfy the key equation \eqref{recur_uv}
that gives rise to the equation \eqref{SDYM-3}.
Then, we introduced real independent coordinates $\xi_n$ and $\eta_n$,
and imposed constraints on $z_n$, $\bar{z}_n$ and $\bs K_1$ and $\bs K_2$.
Finally, we obtained solution $\bs v$ that solves the SDYM equation \eqref{SDYM-4}.
Solutions have been presented via Theorem \ref{thm3}.

Our Cauchy matrix approach is based on the scheme for the AKNS system, cf.\cite{Zhao-2018}.
Compared with \cite{Zhao-2018}, here we introduced the auxiliary variable $x_0$,
frequently made use of the recursive relation  \eqref{Sij_re},
and finally we were able to construct the key equation \eqref{recur_uv}.
We also discussed symmetric relation of $\bs S^{(i,j)}$ and $\bs S^{(j,i)}$.
All these elaborations enabled us to finally prove the property $\bs v=\bs v^\dagger$ and $|\bs v|=1$, and
obtain exact solutions to the SDYM equation \eqref{SDYM-4}.

There are similar direct approaches to construct equation \eqref{recur_uv},
e.g., the one based on bidifferential graded algebra \cite{DMH-2008,DMH-2009},
where it is assumed there exists a function $\bs v$ to satisfy the equation \eqref{recur_uv}.
In our approach, $\bs v$ is clearly defined by $\bs I_2-\bs S^{(-1,0)}$,
which provides an explicit solution to the SDYM equation \eqref{SDYM-4}

In our scheme, after introducing $z_n$ and  $\bar{z}_n$ by \eqref{coor}, we took $m=-n-1$.
Instead of doing that, if we take $m=n-1$ in equation \eqref{SDYM-3}, we have an equation
\begin{align}\label{SDYM-3D}
    (\bs v_{x_{n+1}}\bs v^{-1})_{x_{n-1}}-(\bs v_{x_n}\bs v^{-1})_{x_n}=0, ~~~ (n\in\mb Z),
\end{align}
and by redefining $x_n$ by $ix_n$, it becomes \cite{MZ-1981}
\begin{align}\label{SDYM-3D-MZ}
    (\bs v_{x_{n+1}}\bs v^{-1})_{x_{n-1}}+(\bs v_{x_n}\bs v^{-1})_{x_n}=0, ~~~ (n\in\mb Z).
\end{align}
Manakov and Zakharov constructed its solutions using its Lax pair \cite{MZ-1981}.
Since the equation depends only on 3 independent variables, its solutions may generate monopoles (see \cite{P-1980}).
Solutions obtained in \cite{MZ-1981} are different from ours.
In fact, the two constraints on $n$ and $m$ hold simultaneously  only when $n=0$.
In this case, our results can be applied and the equation \eqref{SDYM-3D-MZ} will be written as
\begin{align}\label{SDYM-2D}
    (\bs v_{x_1}\bs v^{-1})_{x_{-1}}-[[\bs v,\bs a]\bs v^{-1},\bs a]=0,
\end{align}
where we have made use of relation \eqref{v-x0}.
Note that this equation is a 2D equation and was recently derived in \cite{Vek-2020}
by means of a similar direct method.

With regard to the methods of solving the SDYM equation, most of them are direct and constructive.
The approach in \cite{BZ-1978} can be thought of a Darboux-B\"acklund transformation
employing the Lax pair of the SDYM equation. Solutions obtained are algebraic type, i.e.
rational solutions in terms of polynomials of independent of variables.
The bilinear approach given in \cite{SOM-1998} (cf.\cite{ONG-2001}) is to reformulate the bilinear relations satisfied
by the functions $e, f, g$ in \eqref{J-expand} to a new set of bilinear equations involving nine $\tau$ functions
in Hankelians. In principle, they provide algebraic solutions as well.
Our solutions are expressed in terms of exponential functions,
which are in expression formally similar to those obtained via Darboux transformation \cite{NGO-2000}.
However, our solutions are different from those in \cite{NGO-2000},
as $J$ is a solution of the  $\mf{SU}(2)$ SDYM equation
but it is not necessary $J\in \mf{SU}(2)$, cf.\cite{NGO-2000}.
In addition, our expression for $\bs v$ is more explicit and includes
solutions generated by possible canonical forms of $\bs K_1$
(not only diagonal form or a Jordan form, but also any combinations of them).

There are several further investigations related to approach and results of the present paper.
For example, extend the approach to the $\mf{SU}(\mathcal N)$ SDYM equation
and noncommutative case, e.g.\cite{GHN-2009,GHHN-2020,T-2001}.
Besides, it is well known that as a 4D integrable system, the SDYM equation allows
various reductions to lower dimensional integrable equations \cite{ACT-1993,Cha-1993,Mas-1989,Mas-1993,Ward-1985}.
That would be interesting to understand how reductions play roles in generating solutions to
lower dimensional integrable equations.
In addition, Ward used to discuss discretisation of the SDYM equation \cite{Ward-1993}.
Since the Cauchy matrix approach originated from solving discrete integrable systems \cite{NAH-2009},
it would be interesting to have a discrete analogue of the SDYM equation from this approach.

\vskip 20pt
\subsection*{Acknowledgements}
This project is supported by the NSF of China (Nos. 11631007, 11875040, 11971251)
and Science and technology innovation plan of Shanghai (No. 20590742900).

\vskip 20pt
\begin{appendices}

\section{Proof for Lemma \ref{lem-2-2}}\label{A}

First, the following recurrence relations hold (see equation (2.1) in \cite{Zhao-2018}, cf.\cite{XZZ-2014}),
\bsb\label{recur_re}
\begin{align}
    \label{recur_re1}&\bs K^n\bs M-\bs M\bs K^n=\sum_{l=0}^{n-1}\bs K^{n-1-l}\bs r\bs s^T\bs K^l,
    ~~~  (n \in \mb Z^+), \\
    \label{recur_re2}&\bs K^n\bs M-\bs M\bs K^n=-\sum_{l=-1}^{n}\bs K^{n-1-l}\bs r\bs s^T\bs K^l,
    ~~~  (n \in \mb Z^-).
\end{align}
\esb
In addition, in light of the dispersion relation \eqref{rs_move}, we have
\begin{align*}
        \bs K\bs M_{x_n}-\bs M_{x_n}\bs K&=\bs r_{x_n}\bs s^T+\bs r\bs s^T_{x_n} \\
        &=\bs A\bs K^n\bs r\bs s^t+\bs r\bs s^T\bs K^n\bs A \\
        &=\bs A\bs K^n(\bs K\bs M-\bs M\bs K)-\bs A(\bs K\bs M-\bs M\bs K)\bs K^n \\
        &=\bs K\bs A(\bs K^n\bs M-\bs M\bs K^n)-\bs A(\bs K^n\bs M-\bs M\bs K^n)\bs K.
    \end{align*}
Note that we have  assumed $\bs K_1$ and $\bs K_2$ are invertible and do not share any eigenvalues
so that  the Sylvester equation \eqref{Syl_eq} has a unique solution $\bs M$ for given
$\bs K,  \bs r,\bs s$.
With such a property, it follows that
\[\bs M_{x_n}=\bs A(\bs K^n\bs M-\bs M\bs K^n),~~ n\in \mathbb{Z},\]
which, together with \eqref{recur_re}, gives rise to
\bsb\label{M_evo}
\begin{align}
    \label{M_evo1}&\bs M_{x_n}=\sum^{n-1}_{l=0}\bs K^{n-1-l}\bs r\bs a\bs s^T\bs K^{l}, &  (n&\in \mb Z^+), \\
    \label{M_evo2}&\bs M_{x_n}=\bs 0, &  (n&=0), \\
    \label{M_evo3}&\bs M_{x_n}=-\sum^{n}_{l=-1}\bs K^{n-1-l}\bs r\bs a\bs s^T\bs K^{l}. & (n&\in \mb Z^-),
\end{align}
\esb
Next, direct calculation yields
\begin{align*}
        \bs S^{(i,j)}_{x_n}&=\bs s^T_{x_n}\bs K^j(\bs I+\bs M)^{-1}\bs K^i\bs r
        +\bs s^T\bs K^j(\bs I+\bs M)^{-1}\bs K^i\bs r_{x_n}
        +\bs s^T\bs K^j((\bs I+\bs M)^{-1})_{x_n}\bs K^i\bs r  \\
        &=-\bs a\bs S^{(i,j+n)}+\bs S^{(i+n,j)}\bs a-\bs s^T\bs K^j(\bs I+\bs M)^{-1}
        \bs M_{x_n}(\bs I+\bs M)^{-1}\bs K^i\bs r,
\end{align*}
which gives rise to equations \eqref{Sij_moves} after substituting \eqref{M_evo} into it.

\section{Solutions to \eqref{syl-2} and notations}\label{B}

When $\bs K_1$ and $\bs K_2$ in the Sylvester equations \eqref{syl-2} take their canonical forms $\bs\Gamma$
and $\bs\Lambda$, solutions to \eqref{syl-2}  can be represented as \eqref{constructions}.
The involved notations are the following.
Let
\begin{align}
    \bs\Gamma=\mf{diag}(\bs\Gamma_{n_1}(k_1),\bs\Gamma_{n_2}(k_2),\cdots,\bs\Gamma_{n_p}(k_p)), ~~
    \bs\Lambda=\mf{diag}(\bs\Gamma_{m_1}(l_1),\bs\Gamma_{m_2}(l_2),\cdots,\bs\Gamma_{m_q}(l_q)),
\end{align}
where $\bs\Gamma_n(k)$ denotes a  $n$-th order Jordan block
\begin{align*}
    \bs\Gamma_{n}(k)=\begin{pmatrix}
        k & 0  & 0  & \cdots & 0 & 0 \\
        1   & k & 0 & \cdots & 0 & 0   \\
        0   &   1   &  k & \cdots & 0 & 0  \\
        \vdots    & \vdots  & \vdots  & \vdots & \vdots  & \vdots   \\
        0   & 0 & 0 & \cdots & 1 & k
    \end{pmatrix}_{n \times n},
\end{align*}
the index $\{n_i,m_j\}$ are positive integers\footnote{Here $n_i$ and $m_j$ allow to be 1,
which provides diagonal matrix blocks in $\bs\Gamma$ and $\bs\Lambda$.
In particular, when all $\{n_i\}$ are one, $\bs\Gamma=\mf{diag}(k_1, k_2, \cdots, k_N)$.
 }
and satisfy $\sum_{i=1}^pn_i=N_1$, $\sum_{j=1}^qm_j=N_2$.
Introduce a lower triangular Toeplitz matrix
\begin{align*}
\bs F_M(\rho(k))=
        \begin{pmatrix}
            \rho(k) & 0 & 0 & \cdots & 0 \\
            \frac{\partial_{l}\rho}{1!} & \rho(k) & 0 & \cdots & 0 \\
            \frac{\partial^2_{l}\rho}{2!} & \frac{\partial_{l}\rho}{1!} & \rho(k) & \cdots & 0 \\
            \vdots & \vdots & \vdots & \ddots & \vdots \\
            \frac{\partial^{M-1}_{l}\rho}{(M-1)!} & \frac{\partial^{M-2}_{l}\rho}{(M-2)!}
            & \frac{\partial^{M-3}_{l}\rho}{(M-3)!}  & \cdots & \rho(k)
        \end{pmatrix},
    \end{align*}
and a symmetric matrix
\begin{align*}
    \bs H_{M'}(\sigma(l))=\begin{pmatrix}
        \sigma(l) & \frac{\partial_{l}\sigma}{1!} & \frac{\partial^2_{l}\sigma}{2!} & \dots
        & \frac{\partial^{M'-1}_{l}\sigma}{(M'-1)!} \\
    \frac{\partial_{l}\sigma}{1!} & \frac{\partial^2_{l}\sigma}{2!} & \frac{\partial^3_{l}\sigma}{3!} & \dots & 0 \\
    \frac{\partial^2_{l}\sigma}{2!} & \frac{\partial^3_{l}\sigma}{3!} & \frac{\partial^4_{l}\sigma}{4!} & \dots & 0
        \\
        \vdots & \vdots & \vdots & \ddots & \vdots \\
        \frac{\partial^{M'-1}_{l}\sigma}{(M'-1)!} & 0 & 0 & \dots & 0 \\
    \end{pmatrix},
\end{align*}
where the plane wave factors are given by
\begin{align}
    \rho(k_i)=\exp\left(\sum_{n\in\mb Z}k_i^nx_n\right)\rho_i^{(0)}, ~~
    \sigma(l_j)=\exp\left(-\sum_{n\in\mb Z}l_j^nx_n\right)\sigma_j^{(0)},~~~
    \rho_i^{(0)}, \sigma_i^{(0)}\in \mathbb{C}.
\end{align}
Let
\[ \e_M=(\underbrace{1,0,0,\cdots,0}_{\text{M-dimensional}})^T,~~~
\{\bs G_{n,m}(k,l)\}_{ij}=
\tbinom{i-1}{i+j-2}
\frac{(-1)^{i+j}}{(k-l)^{i+j-1}}.\]
Then the basic elements in \eqref{constructions} are expressed as
\begin{align*}
    &\bs F_1=\mf{diag}(\bs F_{n_1}(\rho(k_1)),\dots,\bs F_{n_p}(\rho(k_p))),
    & &\bs H_1=\mf{diag}(\bs H_{n_1}(\rho(k_1)),\dots,\bs H_{n_p}(\rho(k_p))), \\
    &\bs F_2=\mf{diag}(\bs F_{m_1}(\sigma(l_1)),\dots,\bs F_{m_q}(\sigma(l_q))),
    & &\bs H_2=\mf{diag}(\bs H_{m_1}(\sigma(l_1)),\dots,\bs H_{m_q}(\sigma(l_q))), \\
    &\bs E_1^T=(\e_{n_1}^T,\dots,\e_{n_p}^T), & & \bs E_2^T=(\e_{m_1}^T,\dots,\e_{m_q}^T), \\
    &(\bs G_1)_{i,j}=\bs G_{n_i,m_j}(k_i,l_j), & &(\bs G_2)_{j,i}=\bs G_{m_j,n_i}(l_j,k_i).
\end{align*}

\section{Examples of solutions}\label{C}

\subsection{One-soliton solution}
When $N=1$  we have
\begin{align*}
    \bs K_1=k,~~ \bs r_1=\rho,~~ \bs s_1=\sigma,~~ \bs M_1=-(\rho\sigma^*)/(|k|^2+1),
\end{align*}
where (noting that the only difference between $\rho$ and $\sigma$ is the phase factor $\rho^{(0)}(k)$
and $\sigma^{(0)}(k)$)
\begin{subequations}\label{ps}
\begin{align}
    &\rho=\rho(k)=\exp\big((k^n+(-1)^{n+1}k^{-n})\xi_n+\mathrm{i}(k^n-(-1)^{n+1}k^{-n})\eta_n\big)
    \rho^{(0)}(k), \\
    &\sigma=\sigma(k)=\exp\big((k^n+(-1)^{n+1}k^{-n})\xi_n+\mathrm{i}(k^n-(-1)^{n+1}k^{-n})\eta_n\big)
    \sigma^{(0)}(k).
\end{align}
\end{subequations}
Hence the 1-soliton solution is given by
\begin{align*}
    \bs v=
\begin{pmatrix}
    1-s_1^{(-1,0)} & -(s_3^{(-1,0)})^* \\
    -s_3^{(-1,0)} & 1-s_4^{(-1,0)}
\end{pmatrix},
\end{align*}
where
\begin{align*}
    &s_1^{(-1,0)}=-\frac{1}{|k|^2}\frac{|\rho\sigma|^2(|k|^2+1)}{(|k|^2+1)^2-|\rho\sigma|^2},\\ &s_3^{(-1,0)}=\frac{1}{k}\frac{\rho\sigma(|k|^2+1)^2}{(|k|^2+1)^2-|\rho\sigma|^2},\\
    &s_4^{(-1,0)}=-\frac{|\rho\sigma|^2(|k|^2+1)}{(|k|^2+1)^2-|\rho\sigma|^2}.
\end{align*}

\subsection{Two-soliton solution}
In the case $N=2$, we assume
\begin{align*}
    \bs K_1=\begin{pmatrix}
        k & 0\\
        0 & l
    \end{pmatrix},~~
    \bs r_1=\begin{pmatrix}
        \rho_{k}  \\
        \rho_{l}
    \end{pmatrix},~~
    \bs s_1=\begin{pmatrix}
        \sigma_{k}  \\
        \sigma_{l}
    \end{pmatrix},
\end{align*}
and we have
\begin{align*}
    \bs M_1=\begin{pmatrix}
        m_{11} & m_{12} \\
        m_{21} & m_{22}
    \end{pmatrix}
    =\begin{pmatrix}
        -(\rho_{k}\sigma_{k}^*)/(|k|^2+1) & -(\rho_{k}\sigma_{l}^*/(kl^*+1) \\
        -(\rho_{l}\sigma_{k}^*)/(k^*l+1) & -(\rho_{l}\sigma_{l}^*/(|l|^2+1)
    \end{pmatrix}, 
\end{align*}
where
\begin{align*}
    &\rho_{k}= \exp\big((k^n+(-1)^{n+1}k^{-n})\xi_n+\mathrm{i}(k^n-(-1)^{n+1}k^{-n})\eta_n\big)
    \rho^{(0)}_k, \\
    &\sigma_{k}=\exp\big((k^n+(-1)^{n+1}k^{-n})\xi_n+\mathrm{i}(k^n-(-1)^{n+1}k^{-n})\eta_n\big)
    \sigma^{(0)}_k.
\end{align*}
Introduce
\begin{align*}
    \bs T=(\bs I_2-\bs M_1^*\bs M_1)^{-1}=\frac{1}{\tau}\begin{pmatrix}
        T_{11} & T_{12} \\
        T_{21} & T_{22}
    \end{pmatrix},
\end{align*}
where
\begin{align*}
    \tau=|\bs I_2-\bs M_1^*\bs M_1|=1&-\frac{\rho_k\sigma_k\rho_l^*\sigma_l^*}{(kl^*+1)^2}-\frac{|\rho_l\sigma_l|^2}{(|l|^2+1)^2}
    -\frac{|\rho_k\sigma_k|^2}{(|k|^2+1)^2}-\frac{\rho_k^*\sigma_k^*\rho_l\sigma_l}{(k^*l+1)^2} \\
    &+\frac{[(|k|^2+1)(|l|^2+1)-(k^*l+1)(kl^*+1)]^2}{(|l|^2+1)^2(|k|^2+1)^2(kl^*+1)^2(k^*l+1)^2},
\end{align*}
and
\begin{align*}
    T_{11}=1-\frac{\rho_k\sigma_k\rho_l^*\sigma_l^*}{(kl^*+1)^2}-\frac{|\rho_l\sigma_l|^2}{(|l|^2+1)^2}, ~~
    T_{12}=\frac{\rho_k\rho_l^*|\sigma_k|^2}{(|k|^2+1)(kl^*+1)}
    +\frac{|\rho_l|^2\sigma_k^*\sigma_l}{(|l|^2+1)(k^*l+1)}, \\
    T_{21}=\frac{|\rho_k|^2\sigma_k\sigma_l^*}{(|k|^2+1)(kl^*+1)}
    +\frac{\rho_k^*\rho_l|\sigma_l|^2}{(k^*l+1)(|l|^2+1)}, ~~
    T_{22}=1-\frac{|\rho_k\sigma_k|^2}{(|k|^2+1)^2}-\frac{\rho_k^*\sigma_k^*\rho_l\sigma_l}{(k^*l+1)^2}.
\end{align*}
In addition, let
\bsb\label{PQs}
\begin{align}
    \bs P=(\bs I_N-\bs M_1^*\bs M_1)^{-1}\bs M_1^*=\frac{1}{\tau}\begin{pmatrix}
        P_{11} & P_{12} \\
        P_{21} & P_{22}
    \end{pmatrix}=
    \frac{1}{\tau}\begin{pmatrix}
        T_{11}m_{11}^*+T_{12}m_{21}^* & T_{11}m_{12}^*+T_{12}m_{22}^* \\
        T_{21}m_{11}^*+T_{22}m_{21}^* & T_{21}m_{12}^*+T_{22}m_{22}^*
    \end{pmatrix},\\
    \bs Q=\bs M_1(\bs I_N-\bs M_1^*\bs M_1)^{-1}=\frac{1}{\tau}\begin{pmatrix}
        Q_{11} & Q_{12} \\
        Q_{21} & Q_{22}
    \end{pmatrix}=
    \frac{1}{\tau}\begin{pmatrix}
        m_{11}T_{11}+m_{12}T_{21} & m_{11}T_{12}+m_{12}T_{22} \\
        m_{21}T_{11}+m_{22}T_{21} & m_{21}T_{12}+m_{22}T_{22}
    \end{pmatrix}.
\end{align}
\esb
Those elements in $\bs v$ turn out to be
\bsb\label{s134}
\begin{align}
    & s_1^{(-1,0)}=\frac{1}{\tau}(\frac{1}{|k|^2}\sigma_k^*P_{11}\rho_k
    +\frac{1}{kl^*}\sigma_l^*P_{21}\rho_k+\frac{1}{k^*l}\sigma_k^*P_{12}\rho_l
    +\frac{1}{|l|^2}\sigma_l^*P_{22}\rho_l),\\
    &s_3^{(-1,0)}=\frac{1}{\tau}(\frac1k\sigma_kT_{11}\rho_k+\frac1k\sigma_lT_{21}\rho_k
    +\frac1l\sigma_kT_{12}\rho_l+\frac1l\sigma_lT_{22}\rho_l),\\
    &s_4^{(-1,0)}=\frac{1}{\tau}(\sigma_kQ_{11}\rho_k^*+\sigma_lQ_{21}\rho_k^*
    +\sigma_kQ_{12}\rho_l^*+\sigma_lQ_{22}\rho_l^*).
\end{align}
\esb

\subsection{Jordan block solution}

When $\bs K_1$ is a   $2\times 2$ Jordan matrix, we have
\begin{align*}
    \bs K_1=\begin{pmatrix}
        k & 0\\
        1 & k
    \end{pmatrix},~~
    \bs r_1=\begin{pmatrix}
        \rho  \\
        \pa_k\rho
    \end{pmatrix},~~
    \bs s_1=\begin{pmatrix}
        \sigma  \\
        \pa_k\sigma
    \end{pmatrix},
\end{align*}
and
\begin{align*}
   \bs M_1=\begin{pmatrix}
        m_{11} & m_{12} \\
        m_{21} & m_{22}
    \end{pmatrix}
    = \begin{pmatrix}
        -\frac{\rho\sigma}{|k|^2+1}+\frac{k\rho\partial_k\sigma}{(|k|^2+1)^2} & -\frac{\rho\partial_k\sigma}{|k|^2+1} \\
        -\frac{\pa_k\rho\sigma}{|k|^2+1}+\frac{k^*\rho\sigma+\rho\pa_k\sigma}{(|k|^2+1)^2}
        -\frac{|k|^2\rho\pa_k\sigma}{(|k|^2+1)^3} & -\frac{\partial_k\rho\pa_k\sigma}{|k|^2+1}+\frac{k^*\rho\partial_k\sigma}{(|k|^2+1)^2}
    \end{pmatrix}, 
\end{align*}
where $\rho$ and $\sigma$ are defined as in \eqref{ps}.
In this case,
\[
 \bs T=(\bs I-\bs M_1^*\bs M_1)^{-1}
    =\frac{1}{\tau}\begin{pmatrix}
        T_{11} & T_{12} \\
        T_{21} & T_{22}
    \end{pmatrix}
=\frac{1}{\tau}
    \begin{pmatrix}
        1-m_{21}^*m_{12}-|m_{22}|^2 & m_{11}^*m_{12}+m_{12}^*m_{22} \\
        m_{21}^*m_{11}+m_{22}^*m_{21} & 1-|m_{11}|^2-m_{12}^*m_{21}
    \end{pmatrix},\]
where
\begin{align*}
    \tau&=|\bs I-\bs M_1^*\bs M_1|=1-m_{21}^*m_{12}-m_{12}^*m_{21}-|m_{11}|^2-|m_{22}|^2 \\
    &+|m_{11}|^2|m_{22}|^2+|m_{12}|^2|m_{21}|^2
    -m_{11}m_{22}m_{12}^*m_{21}^*-m_{11}^*m_{22}^*m_{21}m_{12} \\
    &=1-2\mr{Re}[m_{12}^*m_{21}+m_{11}m_{22}m_{12}^*m_{21}^*]-|m_{11}|^2-|m_{22}|^2
    +|m_{11}|^2|m_{22}|^2+|m_{12}|^2|m_{21}|^2.
\end{align*}
Then $\bs v$ can be given via formula \eqref{s134} with \eqref{PQs},
while all $m_{ij}$'s and $T_{ij}$'s should be taken from this subsection.

\end{appendices}

\end{document}